\let\old@ssect\@ssect
\def\@ssect#1#2#3#4#5#6{%
  \NR@gettitle{#6}
  \old@ssect{#1}{#2}{#3}{#4}{#5}{#6}
}
\newtheorem{assumption}{Assumption}
\newtheorem{problem}{Problem}
\newtheorem{theorem}{Theorem}
\newtheorem{lemma}{Lemma}
\newenvironment{proof}{
  \par\noindent\textbf{Proof.}\ }
  {\hfill$\square$\par}
\begin{document}
\begin{frontmatter}


\title{Supervisory Measurement-Guided Noise Covariance Estimation: Discussing Forward and Reverse Differentiation} 


\author[First]{Haoying Li} 
\author[First]{Yifan Peng} 
\author[Third]{Yuchi Wu}
\author[First,Second]{Junfeng Wu}
\address[First]{School of Data Science, the Chinese University of Hong Kong, Shenzhen 
   (e-mail: haoyingli,yifanpeng@link.cuhk.edu.cn, junfengwu@cuhk.edu.cn).}
\address[Second]{School of Artificial Intelligence, the Chinese University of Hong Kong, Shenzhen.  }
\address[Third]{School of Mechatronic Engineering and Automation,
Shanghai University(e-mail: wuyuchi@shu.edu.cn)}


\begin{abstract}             
Reliable state estimation depends on accurately modeled noise covariances, which are difficult to determine in practice.
This paper formulates the noise covariance estimation as a bilevel optimization problem that factorizes the joint likelihood of primary and supervisory measurements to reconcile information exploitation with computational tractability. 
The factorization converts the nested Bayesian dependency into a Markov-chain structure, allowing efficient computation. 
At the lower level, a Kalman filter with state augmentation performs such computation.
Meanwhile, closed-form forward and reverse differentiation provide efficient gradients for the upper-level updates, and we compare the two modes’ space and time complexities to inform their practical selection.
The upper level subsequently refines the noise covariances to guide the lower-level estimation. 
Taken together, the proposed algorithms offer a systematic and computationally efficient approach to noise covariance estimation in linear Gaussian systems.
\end{abstract}

\begin{keyword}
Noise covariance estimation; Bayesian inference; Kalman filtering.
\end{keyword}

\end{frontmatter}

\section{Introduction}
State estimation is central to a broad spectrum of engineering applications, such as localization, tracking, and feedback control. 
The accuracy of a state estimator relies critically on precise modeling of the system and its associated noise statistics.
However, in practice, the measurement noise covariances are often unknown, time-varying, or difficult to characterize reliably~\cite{huang2017novel}, 
leading to trial-end-error tuning.
This motivates automatic estimation of noise covariances, with the parameters viewed as optimization variables inferred from noisy data.
The noise covariance estimation problem has been widely studied, and we next provide a brief overview of existing approaches organized by their underlying optimization objectives.

Many works optimize covariances by directly assessing state-estimation performance.
The paper~\cite{wang2023neural} introduces a disturbance estimation for moving-horizon estimation, formulated as a bilevel problem to reduce the state–reference deviation.
The paper~\cite{qadri2024learning} formulates covariance learning as a bilevel problem that minimizes estimation error against ground-truth states.
While empirically effective, their dependence on ground-truth data and lack of probabilistic structure may hinder the recovery of coherent covariance estimates.

On the other hand, parameter inference can be formulated through maximum likelihood (ML) or maximum a posteriori (MAP).
Classical MLE-type approaches include gradient-based optimization and derivative-free schemes such as expectation maximization (EM).
For example, \cite{yoon2021unsupervised} proposes an EM-based covariance learning method, but its reliance on full posterior approximation limits its practicality.
Numerous gradient-based methods are available.
In~\cite{khosoussi2025joint}, joint MAP estimation of the states and noise covariances is solved using block coordinate descent, with most of the computational cost arising from the state-update step.
Alternatively, one may maximize the marginal likelihood of the parameters by integrating out the states, which can be computed efficiently via Kalman filtering.
In Kalman filtering, likelihood sensitivities with respect to the parameters are typically propagated with the state estimate~\cite{sarkka2023bayesian}, and ~\cite{tsyganova2017svd} focus on improving the numerical stability of these recursions.
More recently,~\cite{parellier2023speeding} derives closed-form parameter derivatives through the Kalman filter via backpropagation, enabling faster gradient computation.
 
Existing noise-covariance estimation methods are well supported theoretically and empirically, motivating our use of an MLE/MAP formulation, which provides a statistically grounded framework.
Two issues arise in this context.
One concerns the use of higher-fidelity information to improve parameter estimation, for which we introduce supervisory measurements and develop a probabilistic factorization that yields a bilevel optimization structure. 
The other concerns the efficiency of likelihood and derivative computation, for which we rely on a Kalman filter and analyze forward- and reverse-mode differentiation, characterizing their computational and memory profiles and providing guidance on their practical deployment.
The contributions of this work are summarized as follows.
\begin{enumerate}
\item \textbf{Likelihood factorization for parameter MLE.} We introduce a likelihood factorization that separates primary measurements, whose noise covariances are to be estimated, from higher-fidelity supervisory measurements used to assess parameter quality.
This yields a bilevel optimization structure: a Kalman filter processes the chain-structured primary measurements at the lower level, while the upper-level objective integrates both primary and supervisory losses to fully exploit the information available in the system.
\item \textbf{Comparison of forward- and reverse-mode analytical derivative computation.}
We derive closed-form gradients of the proposed MLE formulation under both forward and reverse differentiation, and compare their behavior within the Kalman-filter–based lower-level solver in terms of computational and memory complexity. Practical guidelines for selecting between the two modes are provided.
\end{enumerate}

\section{Problem Formulation}
Consider a time-varying linear system
\begin{align}
 x_k & = F_k x_{k-1} + B_k u_k 
 +w_k,\quad &&w_k\sim \mathcal{N}(0,Q_k(\theta))\label{eq:system}\\
  y_k & = H_k x_k + \nu_k, \quad &&\nu_k\sim \mathcal{N}(0,R_k(\theta))\label{eq:primary_mea}
\end{align}
where the state $x_k \in \mathbb{R}^d$, the control input $u_k \in \mathcal{U}$, and $Q_k(\theta)$ and $R_k(\theta)$ are 
unknown noise covariances parameterized by $\theta \in \Theta$, with $\Theta$ denoting an admissible parameter set (e.g., ensuring positive definiteness).
Let $\bm{y}^o_{1:k}\triangleq \{y_i\}_{i=1}^k$, collectively referred to as  
\emph{primary measurement}, whose uncertainty is unknown and to be estimated.

In addition to the primary measurement, the system also incorporates \emph{supervisory measurements} with low or known uncertainty, which provide reliable information for evaluating and improving the parameter estimates.
These supervisory measurements are temporally sparse and only available at selected time steps.

For clarity of presentation, we introduce the following notation.
Denote the set of states up to time $k$ as \(\bm{x}_{1:k} = \{x_1, \ldots, x_k\}\).
Let $\bm{x}^s_{1:k}$ be the subset of states involved in the supervisory measurements
and define the corresponding stacked state vector
$$X^s_k = [(x^s_{i_k,1})^\top\;\cdots\; (x^s_{i_k,l})^\top]^\top.$$
The remaining states are given by $\bm{x}^o_{1:k}=\bm{x}_{1:k} \setminus \bm{x}^s_{1:k}$. 
For a time horizon of length \(N\), the subscript \(1:N\) is omitted when clear from context.

The supervisory model, interpreted as observing a subset of states over the entire time horizon, is then given by
\begin{equation}\label{eq:super_mea}
    \bm y^s = H^s X_N^s + \nu^s, \nu^s\sim\mathcal{N}(\bm{0},\Psi),
\end{equation}
where $\Psi$ denotes the supervisory measurement noise covariance, which is assumed to be known.

Two typical supervisory measurements are provided in Fig.\ref{fig:SupervisoryExample}. 
The panel on the left illustrates a loop closure, where the trajectory returns to a previously visited location.
The right panel shows ground-truth states, commonly used in deep learning–based Kalman filtering for supervision as in~\cite{zhao2019learning, revach2022kalmannet}.
Such measurements are typically obtained from high-precision instruments and allow one to set $\Psi=\mathbf 0$ in~\eqref{eq:super_mea}, corresponding to the Dirac delta limit of Gaussians with vanishing variance.
\begin{figure}[ht]
    \centering
    \includegraphics[width=0.8\linewidth]{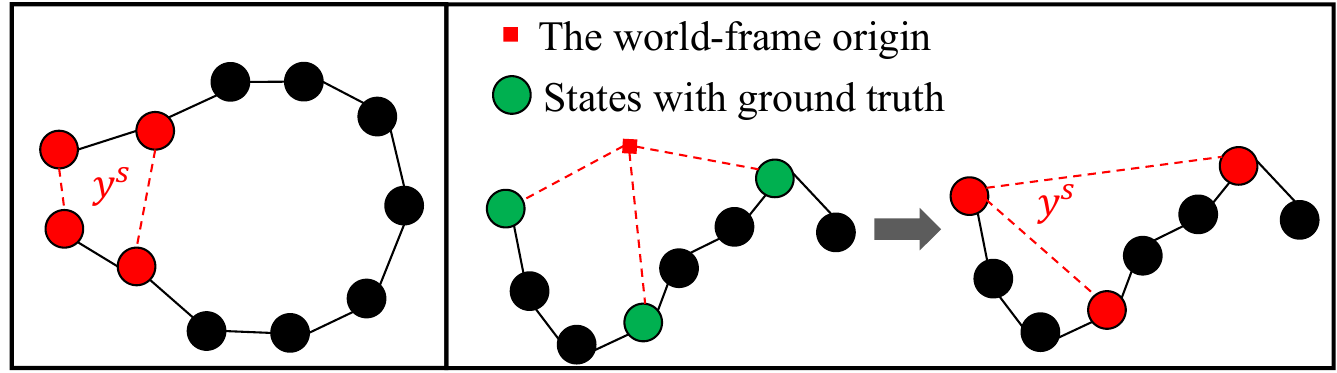}
    \caption{Examples of the supervisory measurements. The states represented by red circles are states in $\bm x^s$.}
    \label{fig:SupervisoryExample}
\end{figure}

In accordance with common practice in state estimation, we assume the following independence condition.
\begin{assumption}~\label{ass:indep}
The noises $w_k$'s, $\nu^b_k$'s, $\nu^v_k$'s, and $\nu^s$'s are mutually independent random variables.\qed
\end{assumption}

We now formalize the covariance estimation problem as an MLE problem as follows
\begin{equation}\label{eqn:MLE_problem}
\hat{{\theta}} = \operatorname*{argmax}_{{\theta}\in\Theta} p(\bm{y}^o,\bm y^s \mid \theta),
\end{equation}
where $\hat{\theta}$ is termed the MLE estimate to~$\theta$. 
Note that the MLE of the parameters can be readily extended to an MAP estimate by introducing an appropriate prior $p(\theta)$.

In what follows, we solve the MLE for the parameter formulated in~\eqref{eqn:MLE_problem} by the gradient-descent method.
Section~\ref{sec:factorization} presents a probabilistic factorization leading to a tractable objective.
Section~\ref{sec:state filter} introduces a Kalman-filter–based forward pass for efficient likelihood computation.
Sections~\ref{sec:forward_diff} and~\ref{sec:reverse_diff} develop forward- and reverse-mode differentiation for obtaining the gradients.
\section{ Likelihood and its Factorization}
\label{sec:factorization}

\subsection{Likelihood Factorization}
Our analysis begins with the factorization of the likelihood in~\eqref{eqn:MLE_problem}.
Treating the states as latent variables and applying Assumption~\ref{ass:indep}, the likelihood can be factorized as
\begin{align*}
&p(\bm{y}^o,\bm{y}^s \mid \theta) \\
=& \int \int p(\bm{y}^o,\bm{y}^s,\bm{x}^o,\bm{x}^s \mid \theta)\, \mathrm{d} \bm{x}^o \mathrm{d} \bm{x}^s \\
= &\int \int 
p(\bm{y}^s \mid \bm{x}^o, \bm{x}^s, \bm{y}^o, \theta)\, p(\bm{x}^o, \bm{x}^s, \bm{y}^o \mid \theta)
\, \mathrm{d} \bm{x}^o \mathrm{d} \bm{x}^s \\
=& \int \int 
p(\bm{y}^s \mid \bm{x}^s)\, p(\bm{x}^o, \bm{x}^s \mid \bm{y}^o, \theta)\, p(\bm{y}^o \mid \theta)
\, \mathrm{d} \bm{x}^o \mathrm{d} \bm{x}^s,
\end{align*}
where the integrations are taken over some product spaces of 
$\mathbb R^d$ by default. 
Further marginalizing out $\bm{x}^o$ yields
\begin{align*}
&p(\bm{y}^o,\bm{y}^s \mid \theta) \\
=& p(\bm{y}^o \mid \theta) \int  
p(\bm{y}^s \mid \bm{x}^s)\,\left(\int p(\bm{x}^o, \bm{x}^s \mid \bm{y}^o,\theta)\,
\mathrm{d} \bm{x}^o \right)\mathrm{d} \bm{x}^s \\
=& p(\bm{y}^o \mid \theta) \int 
p(\bm{y}^s \mid \bm{x}^s)\, p(\bm{x}^s \mid \bm{y}^o, \theta)\,
\mathrm{d} \bm{x}^s.
\end{align*}
By taking the negative logarithm of the likelihood, the objective function becomes
\begin{equation}\label{eq:NLL}
\mathcal{L}(\theta)\triangleq-\log p(\bm{y}^o,\bm{y}^s \mid \theta) 
= \ell^o(\theta) + \ell^s(\theta),
\end{equation}
where 
\begin{align}
    \ell^o(\theta) &\triangleq -\log p(\bm{y}^o \mid \theta),\\
    \ell^s(\theta)& \triangleq -\log \int p(\bm{y}^s \mid \bm{x}^s)\, p(\bm{x}^s \mid \bm{y}^o,\theta)\,\operatorname{d} \bm{x}^s\label{eq:ls}.
\end{align}
We refer to $\ell^o(\theta)$ as the \emph{primary loss} and $\ell^s(\theta)$ as the \emph{supervisory loss}.
By the probability chain rule,  the following decomposition holds
\begin{equation}\label{eq:lo1}
    \ell^o(\theta)
    = -\sum_{k=1}^N \log p\!\left(y_k \mid \bm{y}^o_{1:k-1}, \theta\right),
\end{equation}
where each term satisfies
\begin{equation}\label{eq:decom}
    p\!\left(y_k \mid \bm{y}^o_{1:k-1}, \theta\right)
    = \int p\!\left(y_k \mid x_k, \theta\right)
      p\!\left(x_k \mid \bm{y}^o_{1:k-1}, \theta\right)
      \mathrm{d}x_k.
\end{equation}
Here, $p(y_k \mid x_k, \theta)$ is the primary measurement model~\eqref{eq:primary_mea}.

Evaluating $\mathcal{L}(\theta)$~\eqref{eq:NLL} given $\theta$ requires the computation of $p(x_k \mid \bm{y}^o_{1:k-1},\theta)$ and 
$p(\bm{x}^s \mid \bm{y}^o,\theta)$. 
These quantities are derived in the next section.

\subsection{State Filter for Likelihood Computation}\label{sec:state filter}
As both conditional distributions rely on the sequential primary measurements $\bm y^o$, a Kalman filter provides an efficient mechanism for their computation. 
To explicitly maintain the correlations among the states in $\bm{x}^s_{1:k}$, we introduce the augmented state
$$
X_k=[x_{k}^\top\;
x_{i_{k,1}}^\top\;\ldots\;x_{i_{k,l}}^\top
]^\top=[x_{k}^\top\;(X^s_k)^\top]^\top.
$$

Let $\bar{X}_k,\bar{P}_k$ and $\hat{X}_k,\hat{P}_k$ denote the prior and the posterior estimates with the estimation error covariance of $X_k$. 
The prediction step is as follows:
\begin{equation}
\bar{X}_k=   {F}_{k,0} \hat{X}_{k-1} +  B_{k,0}  u_k, \;\bar{P}_k={F}_{k,0} \hat{P}_{k-1}  {F}_{k,0}^\top  + Q_{k,0},\label{eq:predx}
\end{equation}
where $ F_{k,0} = \bigl[\begin{smallmatrix}
    F_{k} & \bm 0 \\ \bm 0 & I
\end{smallmatrix}\bigr]$, $B_{k,0}=\bigl[\begin{smallmatrix}
    B_{k}  \\ \bm 0 
\end{smallmatrix}\bigr]$ and $ Q_{k,0} = \bigl[\begin{smallmatrix}
    Q_k & \bm 0 \\\bm 0 &\bm 0
\end{smallmatrix}\bigr]$, $\bar P_k =\bigl[\begin{smallmatrix}
   \bar P^o_k & \bar P_k^{os}\\
   \bar P_k^{so} &\bar P_k^s
\end{smallmatrix}\bigr]$.

The update step serves to incorporate primary measurements as follows
\begin{align}
r_k = &y_k - H_{k,0} \bar{X}_k = y_k-H_k \bar x_k ,\label{eq:r}\\
S_k =& {H}_{k,0} \bar{P}_k {H}_{k,0}^\top + \Sigma_k = H_k \bar{P}^o_k H_k^\top +R_k,\label{eq:S}\\
K_k =&\bar{P}_k H_{k,0}^\top S_k^{-1},~~\check{P}_k = \bar{P}_k-K_k H_{k,0} \bar{P}_k\label{eq:K}\\
\check{X}_k =& \bar{X}_k + K_kr_k,\label{eq:updateY}
\end{align}
where $H_{k,0}= [H_k ~\bm{0}]$. 

Once the update completes, $\hat{x}_k$ will append to the end of $\hat{X}_k$ if $x_k \in \bm{x}_k^s$, resulting in the covariance update
\begin{equation}\label{eq:copyP}
 \hat{X}_k = J_k \check{X}_k, \quad  \hat{P}_k=J_k \check{P}_k  J_k^\top,
\end{equation}
where $J_k = [I~ J^\top]^\top$
with {$J = [I \;  0]$ } when the appending takes place, and 
$J_k = I$, otherwise.

For notational simplicity, let $\hat{\bm X}(\theta)$ and $\hat{\bm P}(\theta)$ denote the collections of
${\hat{X}_N^s,,\hat{P}_N^s}$ and ${\bar{x}_k,,\bar{P}^o_k}$.
Given the above factorization, the noise covariance estimation problem can be formulated as a bilevel optimization problem:
\begin{problem}
\begin{align*}
&\text{\small (Upper level)} ~~
\hat{\theta} = \arg\min_{\theta \in \Theta}
\mathcal{L}(\hat{\bm{X}}(\theta), \hat{\bm{P}}(\theta)), \\[0.3em]
&\text{\small (Lower level)} ~~ \text{s.t.}
\{\hat{\bm{X}}(\theta), \hat{\bm{P}}(\theta)\} =
\operatorname*{argmax}\limits_{x_k,\, 1 \leq k \leq N}  
p(x_k \mid \bm{y}^o_{1:k}, \theta).
\end{align*}
\end{problem}

The following lemmas derive the loss terms in the upper-level problem.
Let $\cong$ denote equivalence up to some additive constants.
 
\begin{lemma}[Computation of $\ell^o(\theta)$]\label{lem:llo}
The $\ell^o(\theta)$ admits the following expression:
\begin{equation}\label{eq:loo}
    \ell^o(\theta) \cong \sum_{k=1}^N \frac{1}{2} \log| S_k(\theta)|+\frac{1}{2}r_k(\theta)^\top (S_k(\theta))^{-1}r_k(\theta),
\end{equation}
where $r_k$ is from~\eqref{eq:r} and $S_k$ is from~\eqref{eq:S}.\hfill$\blacksquare$
\end{lemma}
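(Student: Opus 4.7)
The plan is to derive \eqref{eq:loo} directly from the decomposition \eqref{eq:lo1}--\eqref{eq:decom} by exploiting the Gaussianity of the one-step predictive density supplied by the Kalman filter of Section~\ref{sec:state filter}. Since everything inside the integral \eqref{eq:decom} is linear Gaussian, the main work is to identify the mean and covariance of $p(y_k\mid\bm y^o_{1:k-1},\theta)$; the claim then reduces to taking the negative logarithm of a multivariate normal density.

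First I would establish, by induction on $k$, that $p(x_k\mid\bm y^o_{1:k-1},\theta)=\mathcal N(x_k;\bar x_k,\bar P^o_k)$, where $\bar x_k$ and $\bar P^o_k$ are the top block of the augmented prior produced by \eqref{eq:predx}. The base case follows from a Gaussian prior on $x_0$; the inductive step uses the Gaussianity-preserving properties of the linear dynamics \eqref{eq:system} together with Assumption~\ref{ass:indep} and the Kalman update \eqref{eq:updateY}. Extracting the top-left block suffices here, because the supervisory-state augmentation performed by $J_k$ in \eqref{eq:copyP} is a deterministic linear embedding and therefore leaves the marginal of $x_k$ unchanged.

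Next I would carry out the convolution in \eqref{eq:decom} against the primary measurement model $p(y_k\mid x_k,\theta)=\mathcal N(y_k;H_kx_k,R_k)$. This is the standard linear-Gaussian marginalization, yielding
\begin{equation*}
p(y_k\mid\bm y^o_{1:k-1},\theta)=\mathcal N\!\left(y_k;\,H_k\bar x_k,\;H_k\bar P^o_kH_k^\top+R_k\right),
\end{equation*}
whose covariance matches $S_k$ by \eqref{eq:S} and whose residual from $y_k$ is exactly $r_k$ by \eqref{eq:r}. Substituting this density into \eqref{eq:lo1} and negating the logarithm gives, term by term,
\begin{equation*}
-\log p(y_k\mid\bm y^o_{1:k-1},\theta)=\tfrac{m}{2}\log(2\pi)+\tfrac{1}{2}\log|S_k(\theta)|+\tfrac{1}{2}r_k(\theta)^\top S_k(\theta)^{-1}r_k(\theta),
\end{equation*}
where $m=\dim(y_k)$. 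Summing over $k=1,\dots,N$ and absorbing the $\theta$-independent constant $Nm\log(2\pi)/2$ into the equivalence $\cong$ recovers \eqref{eq:loo}.

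No step is technically deep; the only point requiring care is the first one, namely verifying that the augmented-state recursion \eqref{eq:predx}--\eqref{eq:copyP} genuinely delivers the marginal predictive pair $(\bar x_k,\bar P^o_k)$ of the primary state, rather than a conditional or jointly coupled object obscured by the block layout of $X_k$. Once that bookkeeping is in place, the identification $S_k=H_k\bar P^o_kH_k^\top+R_k$ is immediate from \eqref{eq:S}, and the lemma follows.
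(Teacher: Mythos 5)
Your proposal is correct and follows essentially the same route as the paper's proof: identify the Gaussian one-step predictive density $\mathcal N(\bar x_k,\bar P^o_k)$ from the Kalman filter, marginalize against the measurement model to obtain $\mathcal N(H_k\bar x_k, S_k)$, and take the negative log while discarding $\theta$-independent constants. The additional care you take with the inductive Gaussianity argument and the $J_k$ block bookkeeping only fills in details the paper leaves implicit.
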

\begin{proof}
The state filter provides the Gaussian predictive distribution 
$p(x_k \mid \bm{y}_{1:k-1}^o,\theta)=\mathcal{N}(\bar{x}_k,\bar{P}_k^o)$.  
Substituting this into the factorization~\eqref{eq:decom} yields
\begin{align*}
    p(y_k|\bm{y}^o_{k-1},\theta)=\mathcal{N}( H_k \bar{x}_k, H_k\bar{P}_k^oH_k^\top+R_k(\theta)).
\end{align*}
Evaluating its negative log-likelihood and discarding terms independent of $\theta$ leads directly to Lemma~\ref{lem:llo}.
\end{proof}

\begin{lemma}[Computation of $\ell^s(\theta)$]\label{lem:lls}
The supervisory loss $\ell^s(\theta)$ takes the form
\begin{equation}\label{eq:lss}
    \ell^s(\theta)
  \cong 
    \frac{1}{2}\log |C(\theta)|
    + \frac{1}{2} v(\theta)^\top C(\theta)^{-1} v(\theta),
\end{equation}
where $v=y^s - H^s \hat{X}^s_N$,
$
C(\theta)
= H^s \hat{P}_N^s(\theta) (H^s)^\top + \Psi$.
    \hfill$\blacksquare$
\end{lemma}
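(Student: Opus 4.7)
The plan is to recognize that the integral defining $\ell^s(\theta)$ in~\eqref{eq:ls} is a convolution of two Gaussian densities, and therefore closes in Gaussian form. Concretely, I would proceed in three steps.

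First, I would use the output of the augmented Kalman filter in Section~\ref{sec:state filter} to identify the conditional distribution of the supervisory states given the primary measurements. Because the filter keeps $X^s_k$ as part of the augmented state and propagates it jointly with $x_k$ through the linear Gaussian recursions~\eqref{eq:predx}--\eqref{eq:copyP}, linearity and joint Gaussianity ensure that
\begin{equation*}
p(\bm{x}^s \mid \bm{y}^o,\theta) = \mathcal{N}\!\bigl(\hat{X}^s_N(\theta),\,\hat{P}^s_N(\theta)\bigr),
\end{equation*}
where the mean and covariance are precisely the corresponding blocks of $\hat{X}_N$ and $\hat{P}_N$ returned at the final time step.

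Second, I would combine this with the supervisory model~\eqref{eq:super_mea}, which gives $p(\bm{y}^s\mid \bm{x}^s)=\mathcal{N}(H^s X^s_N,\Psi)$. Substituting these two Gaussians into~\eqref{eq:ls} and applying the standard marginalization identity for a linear-Gaussian observation of a Gaussian latent variable yields
\begin{equation*}
\int p(\bm{y}^s\mid \bm{x}^s)\,p(\bm{x}^s\mid \bm{y}^o,\theta)\,\mathrm{d}\bm{x}^s
= \mathcal{N}\!\bigl(\bm{y}^s;\,H^s \hat{X}^s_N,\,C(\theta)\bigr),
\end{equation*}
with $C(\theta)=H^s\hat{P}^s_N(\theta)(H^s)^\top+\Psi$. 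This is the step where Assumption~\ref{ass:indep} is needed, since it guarantees that $\nu^s$ is independent of everything that shaped $p(\bm{x}^s\mid\bm{y}^o,\theta)$.

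Third, I would take the negative logarithm of this Gaussian density, absorb the $\tfrac{1}{2}\log(2\pi)^{\dim \bm{y}^s}$ term into the $\cong$ relation, and read off~\eqref{eq:lss} with $v(\theta)=\bm{y}^s-H^s\hat{X}^s_N$. The only nontrivial point is the first step, namely justifying that the joint posterior $p(\bm{x}^s\mid \bm{y}^o,\theta)$ inherits the mean/covariance returned by the augmented filter; this follows from the fact that the state-appending operation~\eqref{eq:copyP} is just a deterministic linear selector applied inside an already joint-Gaussian recursion, so no information about cross-time correlations is lost. Once this identification is in place, the rest is a routine Gaussian calculation.
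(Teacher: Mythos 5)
Your proposal follows essentially the same route as the paper's proof: identify the augmented filter's output as the Gaussian posterior $p(\bm{x}^s\mid\bm{y}^o,\theta)=\mathcal{N}(\hat{X}^s_N,\hat{P}^s_N)$, substitute it together with the supervisory model~\eqref{eq:super_mea} into~\eqref{eq:ls}, and take the negative log of the resulting Gaussian. You simply spell out the Gaussian marginalization identity and the justification for the augmented-state posterior more explicitly than the paper does, which is correct and consistent with its argument.
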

\begin{proof}
At the final filtering step, the state filter provides the Gaussian posterior
$
p(X^s \mid \bm{y}^o,\theta)
    = \mathcal{N}(\hat{X}^s_N,\,\hat{P}^s_N).$
Substituting this distribution and the supervisory measurement model~\eqref{eq:super_mea} into~\eqref{eq:ls}, and taking its negative log-likelihood gives~\eqref{eq:lss}.
\end{proof}

In the following two sections, we derive the gradients of the loss functions using both forward and reverse differentiation, in order to perform gradient descent optimization.

\section{Forward Differentiation}
\label{sec:forward_diff}
The forward differentiation can be implemented along with the forward likelihood evaluation in section~\ref{sec:state filter}.
By differentiating the Kalman filter recursions~\eqref{eq:predx} to \eqref{eq:copyP} termwisely, it follows that
\begin{align}
\partial_j \bar{X}_{k} 
&= F_{k,0} ~\partial_j \hat{X}_{k-1},\quad \partial_j \bar{P}_k = F_{k,0} ~\partial_j \hat{P}_{k-1} F_{k,0}^\top + \partial_j Q_{k,0},\notag\\
\partial_j r_k 
&= - H_k~ \partial_j \bar{x}_k,\label{eq:drk}\\
\partial_j S_k 
&= H_k ~\partial_j \bar{P}_k^o H_k^\top + \partial_j R_k,\label{eq:dSk}\\
\partial_j K_k 
&= \partial_j \bar{P}_k H_{k,0}^\top S_k^{-1}
   - \bar{P}_k H_{k,0}^\top S_k^{-1}\,\partial_j S_k\, S_k^{-1},\notag\\
\partial_j \check{X}_k 
&= \partial_j \bar{X}_k 
   + (\partial_j K_k)\, r_k 
   + K_k\, \partial_j r_k,\label{eq:dXk}\\
\partial_j \check{P}_k 
&= (I - K_k H_{k,0})\,\partial_j \bar{P}_k
   - (\partial_j K_k) H_{k,0} \bar{P}_k\label{eq:dPk},\\
\partial_j \hat{X}_k &= J_k \partial_j \check{X}_k,\quad \partial_j \hat{P}_k = J_k \partial_j \check{P}_k J_k^\top.\notag
\end{align}

Invoking the chain rule to~\eqref{eq:loo} and~\eqref{eq:lss}, the next two lemmas establish the derivatives of $\ell^o(\theta)$ and $\ell^s(\theta)$.
\vspace{0.5em}

\begin{lemma}[Derivative of $\ell^o(\theta)$]\label{lem:dlo}
The derivative of $\ell^o(\theta)$ with respect to $\theta_j$ is given by
\begin{equation}
    \frac{\partial \ell^o(\theta)}{\partial \theta_j}
    = \sum_{k=1}^N 
      \frac{\partial l_k^o(\theta)}{\partial \theta_j},
\end{equation}
where 
$
l_k^o(\theta)
\triangleq 
\frac{1}{2}\log |S_k(\theta)|
+ \frac{1}{2} r_k(\theta)^\top S_k(\theta)^{-1} r_k(\theta).
$
Its derivative is
\begin{align*}
\frac{\partial l_k^o(\theta)}{\partial \theta_j}
&=
\frac{1}{2}\operatorname{tr}\!\left(S_k^{-1} \frac{\partial S_k}{\partial \theta_j}\right)
+ 
\left(\frac{\partial r_k}{\partial \theta_j}\right)^\top
S_k^{-1} r_k\\
&\quad- \frac{1}{2} 
r_k^\top S_k^{-1} 
\left(\frac{\partial S_k}{\partial \theta_j}\right)
S_k^{-1} r_k,
\end{align*}
where $\frac{\partial r_k}{\partial \theta_j}$ and $\frac{\partial S_k}{\partial \theta_j}$
are obtained from~\eqref{eq:drk} and~\eqref{eq:dSk}.
\hfill$\blacksquare$
\end{lemma}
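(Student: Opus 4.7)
The plan is to differentiate $\ell^o(\theta)$ termwise. Since Lemma~\ref{lem:llo} expresses $\ell^o(\theta) \cong \sum_{k=1}^N l_k^o(\theta)$, linearity of differentiation immediately yields the outer summation formula, so it suffices to establish the expression for $\partial l_k^o/\partial\theta_j$ at a single index $k$.

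For a fixed $k$, I would split $l_k^o$ into its log-determinant piece and its quadratic-form piece. For the log-determinant, Jacobi's identity gives
$$\tfrac{1}{2}\,\partial_j \log|S_k| = \tfrac{1}{2}\operatorname{tr}\!\left(S_k^{-1}\,\partial_j S_k\right),$$
which produces the trace term. For the quadratic form $\tfrac{1}{2} r_k^\top S_k^{-1} r_k$, I would apply the product rule across the three factors and substitute the standard matrix-inverse identity $\partial_j(S_k^{-1}) = -S_k^{-1}(\partial_j S_k)\,S_k^{-1}$. Because the resulting expression is scalar and $S_k^{-1}$ is symmetric, the two contributions coming from differentiating $r_k$ are transposes of one another and therefore combine into a single term $(\partial_j r_k)^\top S_k^{-1} r_k$, with the prefactor $1/2$ absorbed by this doubling. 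The contribution from $\partial_j(S_k^{-1})$ then furnishes the negative quadratic term in $\partial_j S_k$ with factor $1/2$, matching the stated formula.

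All three ingredients, namely Jacobi's formula, the product rule, and the matrix-inverse derivative, are standard matrix-calculus facts, so I do not foresee any serious obstacle. The only real bookkeeping item is ensuring that the building blocks $\partial_j r_k$ and $\partial_j S_k$ are available; these are precisely the quantities produced by the forward-mode sensitivity recursions~\eqref{eq:drk} and~\eqref{eq:dSk} developed earlier in Section~\ref{sec:forward_diff}, and a pointer to those equations closes the argument.
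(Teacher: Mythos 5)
Your proposal is correct and matches the paper's intended argument: the paper offers no explicit proof for this lemma beyond the remark that it follows by ``invoking the chain rule'' to~\eqref{eq:loo}, and your derivation via Jacobi's formula, the product rule, and the identity $\partial_j(S_k^{-1})=-S_k^{-1}(\partial_j S_k)S_k^{-1}$ (with the two symmetric residual terms merging to cancel the factor $1/2$) is exactly that standard computation.
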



\begin{lemma}[Derivative of $\ell^s(\theta)$]\label{lem:dls}
The derivative of $\ell^s(\theta)$ with respect to $\theta_j$ is given by
\begin{align}
\frac{\partial \ell^s(\theta)}{\partial \theta_j}
&=
\frac{1}{2}\operatorname{tr}\!\left(C^{-1} \frac{\partial C}{\partial \theta_j}\right)
+
\left(\frac{\partial v}{\partial \theta_j}\right)^\top C^{-1} v\\
&\quad- \frac{1}{2}
v^\top C^{-1}
\left(\frac{\partial C}{\partial \theta_j}\right)
C^{-1} v,
\end{align}
where 
$
\frac{\partial C}{\partial \theta_j} 
= H^s \left(\frac{\partial \hat{P}_N^s}{\partial \theta_j}\right) (H^s)^\top,
\quad
\frac{\partial v}{\partial \theta_j}
= - H^s \frac{\partial \hat{X}_N^s}{\partial \theta_j},
$
The quantities $\frac{\partial \hat{X}^s_N}{\partial \theta_j}$ and $\frac{\partial \hat{P}_N^s}{\partial \theta_j}$ are obtained from the
sensitivity recursions in~\eqref{eq:dXk} and~\eqref{eq:dPk}.
\hfill$\blacksquare$
\end{lemma}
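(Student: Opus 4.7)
The plan is to differentiate the two summands of $\ell^s(\theta)$ in \eqref{eq:lss} term-by-term using standard matrix calculus identities, then rewrite the result in terms of $\partial \hat{X}_N^s/\partial \theta_j$ and $\partial \hat{P}_N^s/\partial \theta_j$ via the chain rule applied to the definitions of $v$ and $C$. Those two sensitivities are already available from the termwise differentiation of the Kalman recursions \eqref{eq:dXk} and \eqref{eq:dPk}, so no new recursive machinery is required.

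First I would handle the log-determinant term. By Jacobi's formula,
\[
\frac{\partial}{\partial \theta_j}\tfrac{1}{2}\log|C(\theta)|
= \tfrac{1}{2}\operatorname{tr}\!\left(C^{-1}\frac{\partial C}{\partial \theta_j}\right),
\]
which immediately yields the first term in the claimed expression. Next I would differentiate the quadratic form $\tfrac{1}{2}v^\top C^{-1} v$ with the product rule, using the standard identity $\partial C^{-1}/\partial \theta_j = -C^{-1}(\partial C/\partial \theta_j)C^{-1}$. This produces
\[
\left(\frac{\partial v}{\partial \theta_j}\right)^{\!\top}\! C^{-1} v
\;-\; \tfrac{1}{2}\, v^\top C^{-1}\!\left(\frac{\partial C}{\partial \theta_j}\right)\! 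C^{-1} v,
\]
where the symmetry of $C^{-1}$ is used to merge the two cross terms coming from differentiating $v^\top$ and $v$ into a single one (times two, cancelling the $1/2$).

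Finally I would identify the composite derivatives. Since $y^s$, $H^s$, and $\Psi$ do not depend on $\theta$, the chain rule applied to $v = y^s - H^s \hat X_N^s$ and $C = H^s \hat P_N^s (H^s)^\top + \Psi$ gives exactly $\partial v/\partial \theta_j = -H^s\, \partial \hat X_N^s/\partial \theta_j$ and $\partial C/\partial \theta_j = H^s (\partial \hat P_N^s/\partial \theta_j)(H^s)^\top$, and the latter two sensitivities are supplied by the forward recursions in Section~\ref{sec:forward_diff}. Summing the three contributions produces the formula asserted in the lemma.

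I do not anticipate a genuine obstacle here: the computation is a direct application of matrix calculus to \eqref{eq:lss}, entirely analogous to the derivation underlying Lemma~\ref{lem:dlo}, with $(C,v)$ playing the role of $(S_k,r_k)$. The only mildly delicate point is making sure that the symmetry of $C$ and $C^{-1}$ is invoked to combine the two cross terms in the quadratic form's derivative into the single term shown, rather than leaving them as $\tfrac12 v^\top C^{-1}(\partial v/\partial\theta_j) + \tfrac12 (\partial v/\partial\theta_j)^\top C^{-1} v$.
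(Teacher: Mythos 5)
Your proposal is correct and matches the paper's (essentially omitted) argument: the paper simply states that the lemma follows by ``invoking the chain rule'' to the closed form~\eqref{eq:lss}, and your term-by-term differentiation via Jacobi's formula, the derivative of the inverse, and the symmetry of $C^{-1}$ to merge the cross terms is exactly that computation. Nothing further is needed.
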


\begin{theorem}[Forward Differentiation of $\mathcal{L}(\theta)$]
\label{the:L}
The derivative of $\mathcal{L}(\theta)$ is
\begin{equation}
    \frac{\partial \mathcal{L}(\theta)}{\partial ~\theta_j}  = \frac{\partial \ell^o(\theta)}{\partial \theta_j} + \frac{\partial \ell^s(\theta)}{\partial \theta_j},
\end{equation}
where $\frac{\partial \ell^o(\theta)}{\partial \theta_j}$ and $\frac{\partial \ell^s(\theta)}{\partial \theta_j}$ 
are from Lemma~\ref{lem:dlo} and~\ref{lem:dls}.
\hfill$\blacksquare$
\end{theorem}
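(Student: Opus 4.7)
The plan is to exploit the additive decomposition of the negative log-likelihood already established in~\eqref{eq:NLL} and then apply the linearity of differentiation. Since $\mathcal{L}(\theta) = \ell^o(\theta) + \ell^s(\theta)$, and both summands are smooth functions of $\theta$ (through smooth dependence of $Q_k(\theta)$ and $R_k(\theta)$ and through the Kalman recursions that propagate them), one has $\partial \mathcal{L}/\partial \theta_j = \partial \ell^o/\partial \theta_j + \partial \ell^s/\partial \theta_j$ by a one-line application of the sum rule. The theorem then follows by substituting the expressions from Lemma~\ref{lem:dlo} and Lemma~\ref{lem:dls} for the two terms on the right-hand side.

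The only care needed is to ensure that the two partial derivatives invoked are well-defined simultaneously, that is, that the sensitivity quantities $\partial r_k/\partial \theta_j$, $\partial S_k/\partial \theta_j$, $\partial \hat{X}_N^s/\partial \theta_j$, and $\partial \hat{P}_N^s/\partial \theta_j$ are all produced by a single forward sweep of the augmented Kalman filter. This is guaranteed by the termwise sensitivity recursions~\eqref{eq:drk}--\eqref{eq:dPk}, which are obtained by differentiating the filter equations~\eqref{eq:predx}--\eqref{eq:copyP} with respect to $\theta_j$ in place. Since the augmentation operator $J_k$ does not depend on $\theta$, the sensitivities propagate consistently alongside the state and covariance estimates.

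There is no real obstacle in this theorem beyond correct bookkeeping; the analytic content was already absorbed into Lemmas~\ref{lem:dlo} and~\ref{lem:dls} and the forward recursions preceding them. The proof is therefore essentially a one-line statement that the gradient of a sum equals the sum of the gradients, followed by citation of the two lemmas. If anything warrants emphasis, it is only that both lemmas operate on the \emph{same} forward pass of the filter, so the gradient $\partial \mathcal{L}/\partial \theta_j$ can be evaluated in a single coupled forward sweep without redundant computation.
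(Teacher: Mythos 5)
Your proposal is correct and matches the paper's (implicit) argument: the theorem is an immediate consequence of the additive decomposition $\mathcal{L} = \ell^o + \ell^s$ in~\eqref{eq:NLL} together with linearity of differentiation, with the two summands supplied by Lemma~\ref{lem:dlo} and Lemma~\ref{lem:dls}. Your additional remark that both lemmas draw on the same forward sensitivity sweep is accurate and consistent with the paper's setup, so nothing is missing.
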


\begin{figure*}[!b] 
\noindent\rule{\textwidth}{0.4pt}\par
\vspace{-0.5em}
\setlength{\jot}{0.5pt} 
\noindent\small

\begin{align*}
\frac{\partial \mathcal{L}}{\partial \check{X}_k}&=
J_k^\top \frac{\partial \mathcal{L}}{\partial \hat{X}_k},\quad
\frac{\partial \mathcal{L}}{\partial \check{P}_k}= J_k^\top \frac{\partial \mathcal{L}}{\partial \hat{P}_k} J_k,
\notag\\
\frac{\partial \mathcal{L}}{\partial \bar{X}_k} &= (I-K_kH_{k,0})^\top \frac{\partial \mathcal{L}}{\partial \check{X}_{k}} -H_{k,0}^\top S_k^{-1} r_k  ,\\
\frac{\partial \mathcal{L}}{\partial \bar{P}_k }& = \left(I-K_k H_{k,0}\right)^\top \left[ \frac{\partial \mathcal{L}}{\partial \check P_k}+\frac{1}{2}\frac{\partial \mathcal{L}}{\partial \check{X}_k}r_k^\top R_k^{-1}H_{k,0} + \frac{1}{2}H_{k,0}^\top R_k^{-1}r_k \left(\frac{\partial \mathcal{L}}{\partial \check{X}_k}\right)^\top\right]\left(I-K_k H_{k,0}\right)+\frac{1}{2}\left(H_{k,0}^\top S_k^{-1} H_{k,0} -H_{k,0}^{\top} S_k^{-1} r_k r_k^\top S_k^{-1} H_{k,0} \right),\notag\\
\frac{\partial \mathcal{L}}{\partial \hat{X}_{k-1}} &=F_{k,0}^\top  \frac{\partial \mathcal{L}}{\partial \bar{X}_k},\quad\frac{\partial \mathcal{L}}{\partial \hat{P}_{k-1}} = F_{k,0}^\top \frac{\partial \mathcal{L}}{\partial \bar{P}_{k}} F_{k,0} ,\\
\frac{\partial \mathcal{L}}{\partial R_k} &=K_k^\top \frac{\partial \mathcal{L}}{\partial \check{P}_k} K_k-\frac{1}{2} K_k^\top \frac{\partial \mathcal{L}}{\partial \check{X}_k}r_k^T S_k^{-1}-\frac{1}{2} S_k^{-1} r_k\left(\frac{\partial \mathcal{L}}{\partial \check{X}_k}\right)^\top K_k +\frac{1}{2}S_k^{-1}-\frac{1}{2}S_k^{-1} r_k r_k^\top S_k^{-1}\label{eq:L_Rk}, \quad 
\frac{\partial \mathcal{L}}{\partial Q_k}=\frac{\partial \mathcal{L}}{\partial \bar{P}_k }.
\end{align*}

\end{figure*}

\section{Reverse Differentiation}
\label{sec:reverse_diff}
Reverse-mode differentiation is performed after the forward likelihood evaluation, propagating sensitivities backward using the quantities stored during the forward pass. The dependency diagram provided in Fig.~\ref{fig:CG} helps organize this backward computation.
\begin{figure}[ht]
    \centering
    \includegraphics[width=0.6\linewidth]{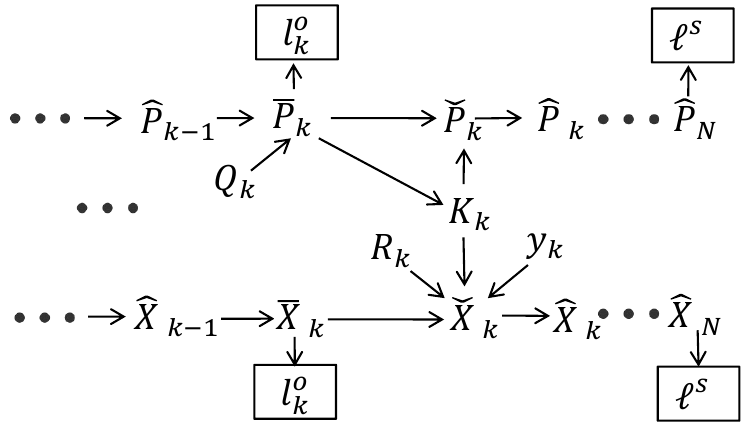}
    \caption{Dependency diagram.}
    \label{fig:CG}
\end{figure}

The following theorem is a direct consequence of Theorem~1 in~\cite{parellier2023speeding}. 
For details, we refer the reader to the original proof.

\vspace{0.5em
}
\begin{theorem}[Reverse Differentiation of $\mathcal{L}(\theta)$.]
\label{theo:re}
The derivative of $\mathcal{L}(\theta)$ with respect to $\theta$ is computed via a backward propagation procedure.
This process is initialized at $k=N$ by setting
\begin{align*}
\frac{\partial \mathcal{L}}{\partial \hat{X}_N^s}&=\frac{\partial \ell^s}{\partial \hat{X}_N^s} = -(H^s)^\top C^{-1} v,\\
\frac{\partial \mathcal{L}}{\partial \hat{P}_N^s}&=\frac{\partial \ell^s}{\partial \hat{P}_N^s} = \frac{1}{2}\left((H^s)^\top C^{-1}H^s -(H^s)^\top C^{-1}v v^\top C^{-1} H^s\right),\\
\frac{\partial \mathcal{L}}{\partial \hat{X}_N}&=G^\top \frac{\partial \mathcal{L}}{\partial \hat{X}_N^s},\quad \frac{\partial \mathcal{L}}{\partial \hat{P}_N}=G^\top \frac{\partial \mathcal{L}}{\partial \hat{P}_N^s} G,
\end{align*}
where $G = [\bm0\;I]$ to extract $\hat{X}_N^s$ from $\hat{X}_N$.
The gradients are propagated recursively backward to the desired time step $k \leq N$ using the relations given at the bottom of the page.
\hfill$\blacksquare$
\end{theorem}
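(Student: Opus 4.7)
The plan is to derive the reverse-mode recursion by repeatedly applying the adjoint chain rule to the forward pipeline of Section~\ref{sec:state filter}---prediction, update, and copy---while aggregating the per-step contribution of the primary loss $l_k^o$ with the terminal contribution of the supervisory loss $\ell^s$. The bulk of the recursion coincides with the adjoint derivation of~\cite{parellier2023speeding}; the novel ingredient here is the terminal initialization induced by $\ell^s$ together with its lifting through the augmented-state extractor $G$.

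I would begin the backward sweep at $k=N$. Since $\mathcal{L}=\ell^o+\ell^s$ and only $\ell^s$ depends on $\hat{X}_N^s$ and $\hat{P}_N^s$, differentiating the closed form in Lemma~\ref{lem:lls} via the standard matrix-calculus identities $\partial \log|C|/\partial C = C^{-1}$ and $\partial (v^\top C^{-1} v)/\partial C = -C^{-1} v v^\top C^{-1}$, together with $\partial v/\partial \hat{X}_N^s = -H^s$, delivers the claimed expressions for $\partial \mathcal{L}/\partial \hat{X}_N^s$ and $\partial \mathcal{L}/\partial \hat{P}_N^s$; symmetry of $C$ collapses any transposes. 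The linear relation $\hat{X}_N^s = G \hat{X}_N$, $\hat{P}_N^s = G \hat{P}_N G^\top$ then lifts these sensitivities to the full augmented quantities via the adjoints $G^\top(\cdot)$ and $G^\top(\cdot) G$.

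For a generic step $k$ I would traverse the three forward blocks in reverse. The copy step is linear, so its adjoint is $J_k^\top$ applied from the appropriate sides. For the update step I would treat $r_k, S_k, K_k, \check{X}_k, \check{P}_k$ as intermediate variables of $\bar{X}_k, \bar{P}_k, R_k$ and apply the chain rule jointly to (i)~the recursions \eqref{eq:r}--\eqref{eq:updateY} and (ii)~the per-step primary loss $l_k^o$ from Lemma~\ref{lem:dlo}. The latter injects an additive $-H_{k,0}^\top S_k^{-1} r_k$ into $\partial \mathcal{L}/\partial \bar{X}_k$ and a $\tfrac12 H_{k,0}^\top (S_k^{-1} - S_k^{-1} r_k r_k^\top S_k^{-1}) H_{k,0}$ piece into $\partial \mathcal{L}/\partial \bar{P}_k$, together with the analogous rank-one contributions into $\partial \mathcal{L}/\partial R_k$. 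The prediction step is again linear in $\hat{X}_{k-1}, \hat{P}_{k-1}, Q_k$, so its adjoint immediately yields the remaining three backward relations, including $\partial \mathcal{L}/\partial Q_k = \partial \mathcal{L}/\partial \bar{P}_k$ since $Q_{k,0}$ enters $\bar{P}_k$ additively.

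The main obstacle is the bookkeeping in $\partial \mathcal{L}/\partial \bar{P}_k$: because $K_k = \bar{P}_k H_{k,0}^\top S_k^{-1}$ is itself a nonlinear function of $\bar{P}_k$ and $S_k$, the adjoints from both $\check{X}_k = \bar{X}_k + K_k r_k$ and $\check{P}_k = (I - K_k H_{k,0})\bar{P}_k$ route back through $K_k$, generating distinct cross-terms that must be symmetrized so that the accumulated sensitivity with respect to the covariance remains symmetric---this is where the $\tfrac12$ factors and the $(I - K_k H_{k,0})$ sandwich in the displayed formula come from. Once this symmetrization is carried out and the direct $l_k^o$ contribution is folded in, the backward recursion closes, and summing the per-step $\partial \mathcal{L}/\partial Q_k$ and $\partial \mathcal{L}/\partial R_k$ through the parameter Jacobians $\partial Q_k/\partial \theta$, $\partial R_k/\partial \theta$ yields $\partial \mathcal{L}/\partial \theta$, matching the statement up to the cited reference.
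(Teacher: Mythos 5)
Your proposal is correct and follows essentially the same route as the paper, which derives the terminal initialization from Lemma~\ref{lem:lls} exactly as you do (the matrix-calculus identities for $\log|C|$ and $v^\top C^{-1}v$ plus the linear lift through $G$) and defers the backward recursion itself to Theorem~1 of \cite{parellier2023speeding}, i.e., precisely the adjoint chain-rule sweep through the prediction, update, and copy blocks that you outline. The only place your sketch would need care when fully expanded is the $\partial\mathcal{L}/\partial\bar P_k$ and $\partial\mathcal{L}/\partial R_k$ cross-terms, where the displayed formulas use the posterior-gain identity (hence the $R_k^{-1}$ rather than $S_k^{-1}$ factors) in addition to the symmetrization you describe.
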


The analytical gradient of $\mathcal{L}(\theta)$, obtained via forward or reverse differentiation, supports gradient-based updates of $\theta$. Algorithms~\ref{alg:forward_mode} and~\ref{alg:reverse_mode} summarize the whole procedures.
\begin{algorithm}[t]
\caption{Forward-Mode Covariance Estimation}
\label{alg:forward_mode}
\begin{algorithmic}[1]
\State \textbf{Input:} Initial parameter $\theta_0$; initial state $(x_0, P_0)$; controls $\{u_k\}_{k=1}^N$; measurements $(\bm y^o, \bm y^s)$; \texttt{itermax}.
\State $\hat{\theta} \gets \theta_0$
\For{$i = 1:\texttt{itermax}$}
    \For{$k = 1:N$}
        \State Perform the state filter recursion~\eqref{eq:predx}--\eqref{eq:copyP}.
        \State Compute $\partial_{\theta_j} l_k^o(\hat{\theta})$ using Lemma~\ref{lem:dlo}.
    \EndFor
    \State Compute $\partial_{\theta_j} \ell^s(\hat{\theta})$ using Lemma~\ref{lem:dls}.
    \State Compute $\nabla_{\theta} \mathcal{L}(\hat{\theta})$ using Theorem~\ref{the:L}.
    \State $\hat{\theta} \gets \hat{\theta} - \eta_i \, \nabla_{\theta} \mathcal{L}(\hat{\theta})$.
\EndFor
\end{algorithmic}
\end{algorithm}
\vspace{-1em}
\begin{algorithm}[t]
\caption{Reverse-Mode Covariance Estimation}
\label{alg:reverse_mode}
\begin{algorithmic}[1]
\State \textbf{Input:} Initial parameter $\theta_0$; initial state $(x_0, P_0)$; controls $\{u_k\}_{k=1}^N$; measurements $(\bm y^o, \bm y^s)$; \texttt{itermax}.
\State $\hat{\theta} \gets \theta_0$
\For{$i = 1:\texttt{itermax}$}
    \For{$k = 1:N$}
        \State Perform the state filter recursion~\eqref{eq:predx}--\eqref{eq:copyP}.
    \EndFor
    \For{$k = N:1$}
        \State Implement recursion in ~\ref{theo:re} to compute 
$\nabla_{\theta} \mathcal{L}(\hat{\theta})$.
    \EndFor
    \State $\hat{\theta} \gets \hat{\theta} - \eta_i \, \nabla_{\theta} \mathcal{L}(\hat{\theta})$.
\EndFor
\end{algorithmic}
\end{algorithm}

\section{Discussion}
In this section, we compare our approach with the closely related work \cite{parellier2023speeding}. The work~\cite{parellier2023speeding} derives closed-form reverse-mode derivatives for a scalar loss of the form $L(\theta)=\sum_{k=1}^N (l_{k|k-1}+l_{k|k})$, of which the two terms depend on the prior and posterior filter variables. Our total log likelihood follows the same structure, with $\ell^o$ and $\ell^s$ playing analogous roles. Unlike the general formulation in \cite{parellier2023speeding}, our decomposition is derived directly from a probabilistic factorization of the likelihood, which yields a principled balance between the primary and supervisory losses. Moreover, while \cite{parellier2023speeding} highlights the computational efficiency of reverse-mode differentiation, it does not address the associated memory costs. We analyze these trade-offs in the following.

The key distinction between the two modes lies in how the chain rule is applied.
Let $D$ be the augmented state dimension, that is, $\hat{X}_N \in \mathbb{R}^D$. 
The forward mode propagates a $p$-dimensional sensitivity vector through the $N$ Kalman recursions, yielding a cost of $\mathcal{O}(pND^{3})$.
Reverse mode instead backpropagates scalar adjoints from the loss, producing all partial derivatives in one sweep with cost $\mathcal{O}(ND^{3})$, comparable to that of a single Kalman filter pass.
However, reverse-mode differentiation incurs high memory overhead because all intermediate quantities must be stored until the backward pass is complete, and gradients are computed only after the forward likelihood evaluation finishes.
The forward mode, on the other hand, requires no storage of past states, computes derivatives concurrently with the Kalman recursion, and naturally supports parallelization across parameter dimensions.

In summary, differentiating through a Kalman filter must account for its inherently sequential structure. Forward differentiation offers low memory usage and is well-suited to online or resource-constrained settings, but its cost scales with the parameter dimension. Reverse differentiation eliminates this scaling at the expense of storing all intermediate states. Neither method is universally superior—the choice depends on the computational context. These trade-offs are summarized in Table~\ref{tab:derivative_methods}.

\begin{table*}[t]
    \centering
    \begin{tabular}{c|c|c|c|c}
    \hline
        Derivative method & Saved variables & Space complexity  & Time complexity & Suggestion \\
    \hline
        Forward differentiation 
        & current-step variables 
        & $\mathcal{O}(pD^2)$ 
        & $\mathcal{O}(pND^3)$ 
        & online adaptation or low-dimensional $\theta$ \\
    \hline
        Reverse differentiation 
        & $\{F_k,K_k, H_k, S_k, r_k\}_{k=1}^N$ 
        & $\mathcal{O}(ND^2)$ 
        & $\mathcal{O}(ND^3)$ 
        & offline optimization or high-dimensional $\theta$ \\
    \hline
    \end{tabular} 
    \caption{Comparison: forward- and reverse-differentiation for Kalman-filter contributing to $\mathcal{L}(\theta)$.}
    \label{tab:derivative_methods}
\end{table*}

\section{Simulation and Analysis}
\subsection{Simulation Settings}
The same linear system model as~\cite{parellier2023speeding} is
used. 
The state consists of a six-dimensional vector $\left[p_k^\top~v_k^\top\right]^\top$, where the position $p_k \in \mathbb{R}^3$ and the velocity $v_k\in \mathbb{R}^3$.
The noisy accelerations are considered as inputs $u_k \in \mathbb{R}^3$.
The dynamics write
\begin{align*}
    p_k &= p_{k-1} + \Delta t~ v_{k-1} + \omega_{k}^p,\\
    v_k & =v_{k-1} + \Delta t~ u_{k-1} + \omega_{k}^v,
\end{align*}
where $\Delta t=1$ is the time step and $\omega_k\sim \mathcal{N}(\bm0,Q)$ is the process noise with $Q=qI$ and $q=0.01$.

The measurement $y_k$ is the noisy 3-D position
\begin{equation*}
    y_k = p_k + \nu_k,\qquad 
    \nu_k \sim \mathcal{N}(\bm 0, R(\theta)).
\end{equation*}

In this example, the supervisory measurements consist of relative positions between selected pairs of supervisory states. 
For any pair of state $x_i$ and $x_j$,
\begin{equation*}
    y^s_{ij} = p_i-p_j+\nu_{ij}^s,  \quad \nu_{ij}\sim \mathcal{N}(\bm 0, \alpha I)
\end{equation*}
Stacking all the $y^s_{ij}$ yields $\bm{y}^s$ and the overall supervisory noise is $\Psi=\alpha I$, with $\alpha=0.01$. 

The goal is to optimize $R(\theta)$ by minimizing the proposed $\mathcal{L}(\theta)$ on the calibration trajectory, and to assess the learned parameter on independent test trajectories. Two trajectories are discussed as follows.

\textbf{Calibration trajectory.}
The calibration trajectory contains 100 steps, as illustrated in Fig.~\ref{fig:calib_traj}. 
Supervisory measurements are generated using a downsampling rate and a distance threshold.
\begin{figure}[ht]
    \centering
    \includegraphics[width=0.6\linewidth]{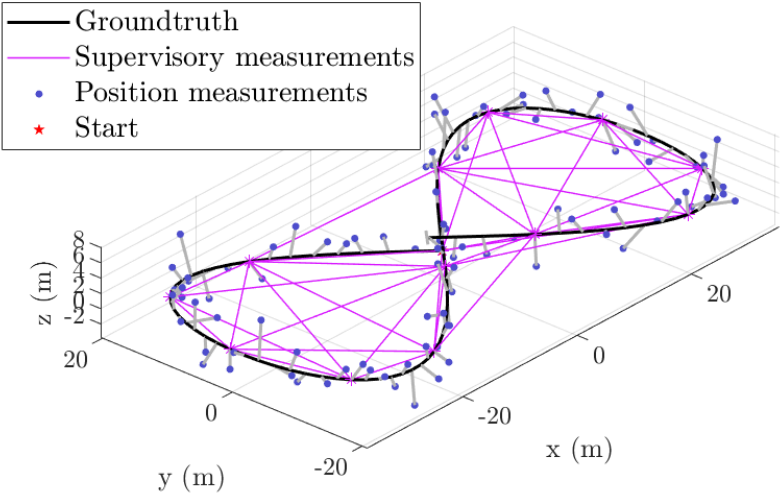}
    \caption{Calibration trajectory.}
    \label{fig:calib_traj}
\end{figure}

\textbf{Test trajectory.}
The test dataset consists of 600 steps, and the corresponding path is shown in Fig.~\ref{fig:test_traj}.
\begin{figure}[h]
    \centering
    \includegraphics[width=0.6\linewidth]{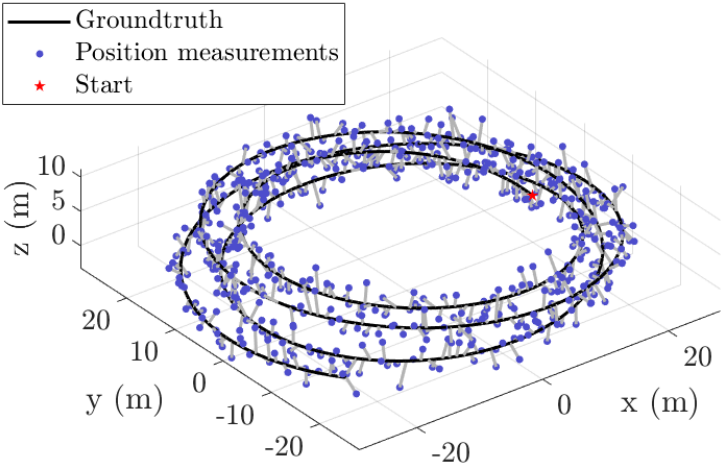}
    \caption{Test trajectory.}
    \label{fig:test_traj}
\end{figure}

\textbf{Simulation noise setup.}
Following~\cite{khosoussi2025joint}, the measurement noise $\nu_k\sim \mathcal{N}(\bm 0, R^{\text{true}})$ is given by
$
    R^{\text{true}} =  R_{\text{base}} + R_d,
$
where $ R_{\text{base}}$ is a fixed covariance matrix that encodes the correlation of measurement noises, and $R_d=\operatorname{diag}(0.9^2, 1.3^2, 2.2^2)~m^2$, which specifies the noise level along each axis. The resulting GPS noise has an RMSE of approximately 2.9~m.

\subsection{Results}
We compare our method with a $\ell^o$-only case, which refers to methods in~\cite{parellier2023speeding}. 
We also compare three parameterizations
scheme of $R(\theta)$, as listed below:
To ensure the positive defiteness of $R(\theta)$, we test three paramterization methods: 
\begin{enumerate}
    \item Isotropic covariance ($\theta \in \mathbb{R}$):
    $
    R(\theta) = \exp(\theta) I.
    $
    \item Diagonal matrix ($\theta \in \mathbb{R}^3$):
    $$
    R(\theta) = \operatorname{diag}(\exp(\theta_1)\;  \exp(\theta_2)\;\exp(\theta_3)\;).
    $$
    \item Cholesky decomposition ($\theta \in \mathbb{R}^6$):
    $$
    R(\theta)=L(\theta)L(\theta)^\top,
    $$
    where $L(\theta)$ is a low-triangular matrix, with positive diagonal entries enforced through exponentiation. 
\end{enumerate}
Methods 1 and 2 use forward differentiation due to their small parameter dimension, whereas Method 3 adopts reverse mode. 
All methods share the same optimization setup and run for 20 iterations, with $\theta$ optimized on the calibration trajectory. 
Fig. \ref{fig:LL} shows the loss evolution across iterations, along with the corresponding RMSE curves that reflect the tuning performance.
\begin{figure}[ht]
    \centering
    \includegraphics[width=0.7\linewidth]{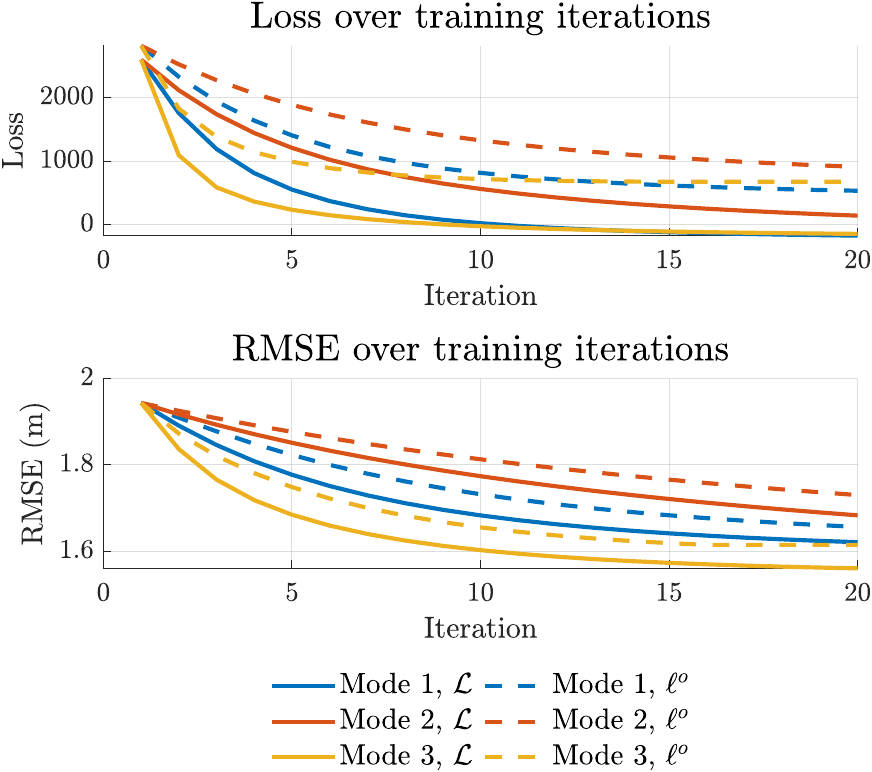}
    \caption{Loss and RMSE over the training iterations. }
    \label{fig:LL}
\end{figure}

The estimated parameters are evaluated on an independent testing trajectory. 
We conduct 100 Monte Carlo trials using the same noise model but independently sampled calibration and testing data. 
The average testing results are summarized in the table, where parentheses indicate the number of supervisory states and supervisory measurements. 
Labels (F) and (R) denote gradients computed via forward- and reverse-mode differentiation.
\begin{table}[ht]
    \centering
    \begin{tabular}{c|ccc}
    \hline
    \textbf{Method} & $\ell^o$ & $\mathcal{L^-} (13,40)$ & $\mathcal{L^+} (20, 147)$ \\
    \hline
    Mode 1~(F) &  1.6091  &1.6038    & 1.5795  \\
    Mode 2~(F) & 1.6669   & 1.6456   & 1.6287  \\
    Mode 3~(R) & 1.6148   & 1.5838   & \textbf{1.5606}  \\
    \hline
    \end{tabular}
   \caption{Average RMSE~(m) on the test trajectory over 100 Monte-Carlo trials. }
    \label{tab:rmse}
\end{table}

In summary, the simulation used relatively large measurement noise, and MLE-based noise tuning consistently reduced state estimation error. 
Incorporating supervisory loss further improved optimization, and additional supervisory measurements generally helped, though at increased computational cost due to state augmentation.
These results suggest several practical guidelines.
First, the parameterization should match the intrinsic noise structure, as the Cholesky form better captures the structure of the covariance, yielding the best performance, and more degrees of freedom do not necessarily yield better performance. For example, Mode 1 outperforms Mode 2 despite its lower dimensionality.
Second, the choice of differentiation method should reflect the parameter dimensionality: forward mode is more memory-efficient for small parameter spaces, whereas reverse mode becomes advantageous as dimensionality increases. 
Finally, supervisory information helps, but its size should be limited. For instance, $n_s$ supervisory states can form up to $n_s(n_s-1)/2$ mutual observations to provide strong supervision while keeping the state manageable in the state filter.

\section{Conclusion}
This paper studied an MLE/MAP framework for estimating process and measurement noise covariances in linear Gaussian systems. By factorizing the likelihood of primary and supervisory measurements, the approach enables efficient state estimation and analytic gradient computation through both forward- and reverse-mode differentiation. The maximum-likelihood optimization reliably improves estimation accuracy, with performance depending critically on appropriate noise parameterization, differentiation strategy, and the controlled use of supervisory information. Beyond the linear setting, the framework is readily extensible—for example, to nonlinear state-space models or to hybrid formulations that incorporate deep learning components for noise modeling or adaptive parameterization. Overall, the proposed method provides a principled and computationally sound foundation for noise covariance estimation with clear potential for broader applications.


\bibliography{ifacconf}             
        
\end{document}